\LetLtxMacro{\todom}{\todo}
\newcommand{\monomer}[1]{\textbf{\lowercase{#1}}}
\newcommand{\polymer}[1]{\textbf{\uppercase{#1}}}
\newcommand{\functionproblem}[1]{\textsc{#1}}
\renewcommand{\vec}[1]{\mathbf{#1}}
\newcommand{\defn}[1]{\emph{#1}}
\newcommand{\N}{\ensuremath{\mathbb{N}}}
\newcommand{\saturatedconfigs}[1]{\Gamma_{#1}}
\newcommand{\selfsaturated}{self-saturated}
\newcommand{\Count}{\ensuremath{\mathsf{Count}}}
\newcommand{\Exists}{\ensuremath{\mathsf{Exists}}}
\newcommand{\Tied}{\ensuremath{\mathsf{Tied}}}
\definecolor{specialblue}{rgb}{0.0, 0.53, 0.71}
\title{Computing properties of thermodynamic binding networks: An integer programming approach}
\titlerunning{Computing properties of thermodynamic binding networks: An IP approach}
\author{David Haley}{University of California, Davis}{drhaley@ucdavis.edu}{}{}
\author{David Doty}{University of California, Davis}{doty@ucdavis.edu}{}{}
\authorrunning{D. Haley and D. Doty}
\keywords{thermodynamic binding networks, integer programming, constraint programming} 
\begin{document}

\maketitle

\begin{abstract}
    The thermodynamic binding networks (TBN) model~\cite{tbn} is a tool for studying
    engineered molecular systems.
    The TBN model
    allows one to reason about their behavior through a simplified abstraction that ignores details about molecular composition, 
    focusing on two key determinants of a system's energetics common to \emph{any} chemical substrate: 
    how many molecular bonds are formed, 
    and how many separate complexes
    exist in the system.
    We formulate as an integer program the $\mathsf{NP}$-hard problem of computing \emph{stable} 
    (a.k.a., minimum energy) 
    configurations of a TBN:
    those configurations that maximize the number of bonds and complexes.
    We provide open-source software \cite{stable_tbn_software} solving this integer program.
    We give empirical evidence that this approach enables dramatically faster computation of TBN stable configurations than previous approaches based on SAT solvers~\cite{tbn-sat}.  
    Furthermore, unlike SAT-based approaches, our integer programming formulation can reason about TBNs in which some molecules have unbounded counts.
    These improvements in turn allow us to efficiently automate verification of desired properties of practical TBNs.
    Finally, we show that 
    the TBN has a natural representation with a unique Hilbert basis
    describing
    the ``fundamental components'' out of which \emph{locally} minimal energy configurations are composed. 
    This characterization helps verify correctness of not only stable configurations, but entire ``kinetic pathways'' in a TBN.

\end{abstract}


\section{Introduction}


Recent experimental breakthroughs in DNA nanotechnology~\cite{dnaNanoSurveySeelig2015} have enabled the construction of intricate molecular machinery whose complexity rivals that of biological macromolecules,
even executing general-purpose algorithms~\cite{drmaurdsa}.
A major challenge in creating synthetic DNA molecules that undergo desired chemical reactions is the occurrence of erroneous ``leak''  reactions~\cite{qian2011scaling},
driven by the fact that the products of the leak reactions are more energetically favorable.
A promising design principle to mitigate such errors is to build ``thermodynamic robustness'' into the system, ensuring that leak reactions incur an energetic cost~\cite{thachuk2015leakless, wang2018effective,minev2021robust} by logically forcing 
one of two unfavorable events:
either many molecular bonds must break---an ``enthalpic'' cost---or many separate molecular complexes 
(called \emph{polymers} in this paper) 
must simultaneously come together---an ``entropic'' cost.

The model of \emph{thermodynamic binding networks} (TBNs)~\cite{tbn} was defined as a combinatorial abstraction of such molecules, 
deliberately simplifying
substrate-dependent details of DNA in order to isolate the foundational energetic contributions of forming bonds and separating polymers.
A TBN consists of \emph{monomers}
containing
specific \emph{binding sites}, where binding site $a$ can bind only to its complement $a^*$.
A key aspect of the TBN model is the lack of geometry: a monomer is an \emph{unordered} collection of binding sites such as $\{a,a,b^*,c\}$.
A \emph{configuration} of a TBN describes which
monomers are grouped into \emph{polymers}; bonds can only form within a polymer.
One can formalize the ``correctness'' of a TBN by requiring that its desired configuration(s) be \emph{stable}:
the configuration maximizes the number of bonds formed,
a.k.a., it is \emph{saturated},
and, among all saturated configurations, 
it maximizes the number of separate polymers.\footnote{
    This definition captures the limiting case (often approximated in practice in DNA nanotechnology) corresponding to increasing the strength of bonds, while diluting (increasing volume), such that the ratio of binding to unbinding rate goes to infinity.
}
See~\cref{fig:tbn-intro-example} for an example.
Stable configurations are meant to capture the \emph{minimum free energy structures} of the TBN.
Unfortunately, answering basic questions such as 
``\emph{Is a particular TBN configuration stable?}''
turn out to be $\mathsf{NP}$-hard~\cite{tbn-sat}.

\begin{figure}[h]
    \vspace{-0.2cm}
    \centering
    \includegraphics[width=4.7in]{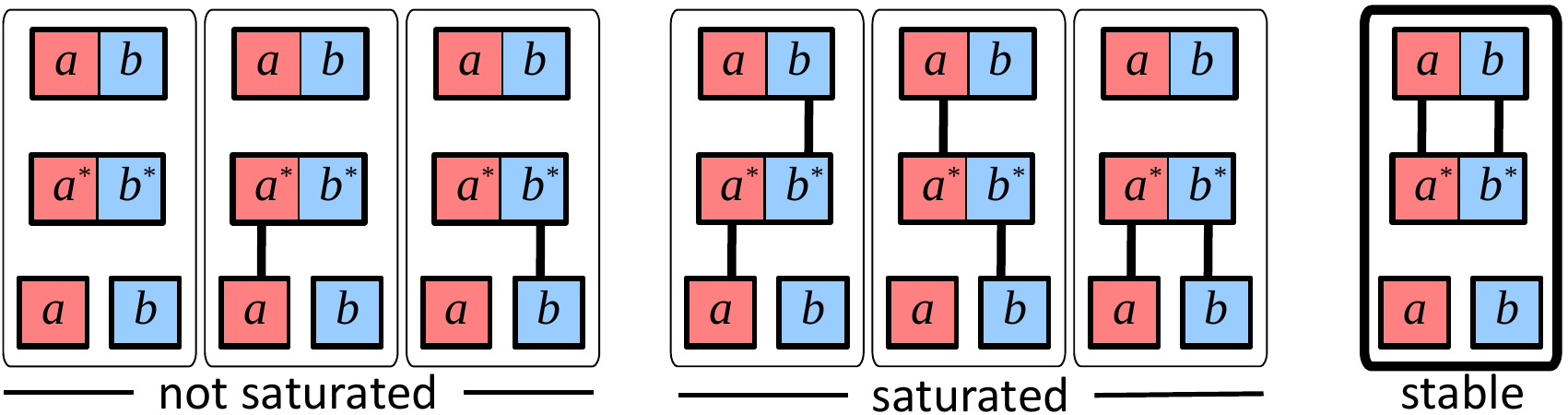}
    \vspace{-0.2cm}
    \caption{ \footnotesize
    Example of a simple thermodynamic binding network (TBN). 
    There are four monomers: 
    $\monomer{m}_1 = \{a^*, b^*\}, 
    \monomer{m}_2 = \{a, b\}, 
    \monomer{m}_3 = \{a\}, 
    \monomer{m}_4 = \{b\},$
    with seven configurations shown:
    four of these configurations are saturated because they have the maximum of 2 bonds.
    Of these, three have 2 polymers and one has 3 polymers, making the latter the only stable configuration.
    Despite the suggestive lines between binding sites, the model of this paper ignores individual bonds, defining a configuration solely by how it partitions the set of monomers into polymers,
    assuming that a maximum number of bonds will form within each polymer.
    (Thus other configurations exist besides those shown, which would merge polymers shown without allowing new bonds to form.)
    }
    \label{fig:tbn-intro-example}
    \vspace{-0.6cm}
\end{figure}

\subsection{Our contribution}

Our main contribution is a reduction that formulates the problem of finding stable configurations of a TBN as an integer program (IP).
Of course, the problem, appropriately formalized, is ``clearly'' an $\mathsf{NP}$ search problem, so the mere existence of such a reduction is not particularly novel.
However, our formulation is notable in 
three respects:
1) We carefully avoid certain symmetries (particularly those present in the existing SAT-based formulation of Breik et al.~\cite{tbn-sat}), which dramatically increases the search efficiency in practice.
2) We use the optimization ability of IP solvers as a natural way to maximize the number of polymers in any saturated configuration.
3) Our formulation leads to a natural interpretation of the \emph{Hilbert basis}~\cite{de2012algebraic} of a TBN as its minimal saturated polymers, which intuitively are the polymers existing in any \emph{local} energy minimum configuration.
Since highly optimized software exists for calculating Hilbert bases~\cite{4ti2}, 
this expands the range of TBN behaviors that can be automatically reasoned about.

This formulation allows us to automate portions of formal reasoning about TBNs, helping verify their correctness.
The TBN model abstracts away the continuous nature of real free energy into discrete integer-valued steps. 
In the limit of dilute solutions 
(bringing together polymers incurs a large energy cost)
and very strong bonds 
(breaking a bond incurs a huge energy cost),
even one integer step of energy difference is considered significant.
Thus it is crucial for verifying such systems that we identify the \emph{exact} solution to the optimization problem, rather than settling for more efficiently computable approximations (e.g., via continuous relaxation \cite{conforti2014integer} or local search \cite{shaw2003constraint}).


\subsection{Related work}
Breik, Thachuk, Heule, and Soloveichik~\cite{tbn-sat} characterize the computational complexity of several natural problems related to TBNs.
For instance, it is $\mathsf{NP}$-complete to decide whether a saturated configuration exists with a specified number of polymers, and even $\mathsf{NP}$-hard to approximate within any constant factor the number of polymers in a stable configuration (i.e., the maximum in any saturated configuration).

Breik et al. also developed software using a SAT solver to produce stable configurations of a TBN. 
This formulation requires ``labelled'' monomers (where two different instances of the same monomer type are represented by separate Boolean variables), 
which become nodes in a graph, and polymers are realized as connected components within the graph.
By labelling the monomers they become unique copies of the same monomer type;
$n$ copies of a monomer type increases the size of the search space by factor $n!$ by considering these symmetric configurations separately.
Furthermore, the software explicitly explores all possible symmetries of bonding arrangements within a polymer.
For instance, monomers $\{a^*, a^*\}$ and $\{a,a\}$ can bind in two different ways 
(the first $a^*$ can bind either the first or second $a$),
even though both have the same number of bonds and polymers.
This over-counting of symmetric configurations prevents the software from scaling to efficiently analyze certain TBNs with large counts of monomers.
Our IP formulation avoids both types of symmetry.


\section{Preliminaries}

\subsection{Definitions}


A \defn{multiset} is an unordered collection of objects allowing duplicates 
(including countably infinite multiplicities), 
e.g., $\vec{v} = \{2\cdot a, b, \infty \cdot d\}$.
Equivalently, a multiset with elements from a finite set $U$ is a vector $\vec{v} \in (\N \cup \{\infty\})^{U}$ describing the counts, indexed by $U$;
in the example of~\cref{fig:tbn-intro-example}, if $U = \{a,b,c,d\}$,
then
$\vec{v}(a) = 2$, 
$\vec{v}(b) = 1$,
$\vec{v}(c) = 0$,
and
$\vec{v}(d) = \infty$.
The \defn{cardinality} of a multiset $\vec{v} \in \N^U$ is $|\vec{v}| = \sum_{u \in U} \vec{v}(u)$;
a \defn{finite multiset} $\vec{v}$ obeys $|\vec{v}| < \infty$.
%
A \defn{site type} is a formal symbol, such as $a$,
representing a specific binding site on a molecule;
in~\cref{fig:tbn-intro-example} the site types are
$a, a^*, b, b^*$.
%
Each site type has a corresponding \defn{complement type} denoted by a star, e.g. $a^*$.
Complementarity is an involution:  $(a^*)^* = a$.  
A site and its complement can form an attachment called a \defn{bond}.
We follow the convention that for any complementary pair of sites $a,a^*$,
the total count of $a^*$ across the whole TBN is at most that of $a$, i.e., the starred sites are \defn{limiting}.
A \defn{monomer type} is a finite multiset of site types.  
When context implies a single instance of a monomer/site type, we may interchangeably use the term \defn{monomer/site}.\footnote{
    Concretely, in a DNA nanotech design, a monomer corresponds to a strand of DNA, whose sequence is logically partitioned into binding sites corresponding to long (5-20 base) regions, e.g., $5'$-AAAGG-$3'$, intended to bind to the complementary sequence $3'$-TTTCC-$5'$ that is part of another strand.
}

A \defn{thermodynamic binding network (TBN)} is a multiset of monomer types;
equivalently, a vector in $(\N \cup \{\infty\})^m$,
if $m$ is the number of monomer types and we have fixed some standardized ordering of them.  
We allow some monomer counts to be infinite in order to capture the case where some monomers are added in ``large excess'' over others, a common experimental approach~\cite{rothemund2006folding, qian2011scaling}.
A \defn{polymer} is a finite multiset of monomer types,
equivalently, a vector in $\N^m$,
where $m$ is the number of monomer types in a standardized ordering.\footnote{
    The term ``polymer'' is chosen to convey the concept of combining many atomic objects into a complex, but it is not necessarily a linear chain of repeated units.
}
Note that despite the suggestive lines representing bonds in~\cref{fig:tbn-intro-example}, 
this definition does not track which pairs of complementary sites are bound within a polymer.

Let $S_\mathcal{T}$ (respectively, $S^*_\mathcal{T}$) be the set of unstarred (resp., starred) site types of $\mathcal{T}$.
For a monomer $\monomer{m}$ and site type $s \in S_\mathcal{T}$, let $\monomer{m}(s)$ denote the count of $s$ minus the count of $s^*$ in $\monomer{m}$ (intuitively, $\monomer{m}(s)$ is the ``net count'' of $s$ in $\monomer{m}$, negative if there are more $s^*$.)
For $s^* \in S_\mathcal{T}^*$ let $\monomer{m}(s^*) = -\monomer{m}(s)$.
The \defn{exposed sites} of a polymer $\polymer{P}$ are a finite multiset of site types 
that results from removing as many (site, complement) pairs from a polymer as possible, described by the net count of sites when summed across all monomers in the polymer. 
For example, in the polymer 
$\{\{a^*, b^*, c^*\}, \{a, c\}, \{a, b, c\}, \{c,d^*\}\}$,
the exposed sites are $\{a, 2\cdot c, d^*\}$.
%

A \defn{configuration} of a TBN is a partition of the monomers of the TBN into polymers.
A polymer is \defn{\selfsaturated} if it has no exposed starred sites.
A configuration is \defn{saturated} if 
all of its polymers are \selfsaturated.
Since we assume that, across the entire configuration, starred sites are limiting,
this is equivalent to stipulating that the maximum possible number of bonds are formed.
Write $\saturatedconfigs{\mathcal{T}}$ to denote the set of all saturated configurations of the TBN $\mathcal{T}$. 
%
%
A configuration is \defn{stable} if it is saturated and has the maximum number of non-singleton polymers among all saturated configurations. 

Since the number of polymers may be infinite, we will use the equivalent notion that stable configurations are those that can be constructed by starting with the ``melted'' configuration whose polymers are all singletons containing only one monomer, performing the minimum number of polymer merges necessary to reach a saturated configuration.
For example, consider the TBN consisting of monomer types $\monomer{t} = \{a\}$, $\monomer{b} = \{a^*\}$,
with counts
$\infty \cdot \monomer{t}$ and
$2 \cdot \monomer{b}$.
The unique stable configuration has polymers 
$\{2 \cdot \{\monomer{b},\monomer{t}\}, \infty \cdot \monomer{t} \}$,
since two merges of a $\monomer{b}$ and a $\monomer{t}$ are necessary and sufficient to create this configuration from the individual monomers.

\subsection{Solvers}

The problems addressed in this paper are 
$\mathsf{NP}$-hard. 
To tackle this difficulty, 
we cast the problems as integer programs 
and use the publicly available IP solver SCIP~\cite{GamrathEtal2020OO}.

We also use the open-source software OR-tools\cite{ortools},
which is a common front-end for SCIP\cite{GamrathEtal2020OO}, Gurobi~\cite{gurobi}, and a bundled constraint programming solver CP-SAT.
Though we model our problems as IPs, we would also like to be able to solve for all feasible/optimal solutions 
rather than just one,
which CP-SAT can do.
This flexible front-end lets us switch seamlessly between the two types of solvers without significant alterations to the model.

\todo{DD: There was a paragraph about SCIP for finding optimal value and CP-SAT for enumerating solutions, which seemed out of place, since we discuss this issue extensively in Paragraph~\ref{subsec:ip-find-optimal-then-cp-enumerate}, and this section is just supposed to point the reader to what tools we used.}
We use the software package 4ti2\cite{4ti2} to calculate Hilbert Bases as described in~\cref{sec:hilbert}.

\section{Computing stable configurations of TBNs}
\cref{subsec:stableconfigs-def} formally defines the stable configurations problem.
\cref{subsec:formulation} explains our IP formulation of the problem.
\cref{subsec:benchmarks} shows empirical runtime benchmarks.

\subsection{Finding stable configurations of TBNs}\label{subsec:stableconfigs-def}


We consider the problem
of finding the stable configurations of a TBN.  
Given a TBN $\mathcal{T}$, let $\Gamma_{\mathcal{T}}$ denote the set of all saturated configurations of $\mathcal{T}$.

Recall that a configuration $\gamma \in \Gamma_{\mathcal{T}}$ is defined as a partition of the monomers of $\mathcal{T}$ into polymers, so its elements $\polymer{P} \in \gamma$ are polymers, i.e., multisets of monomers.
For any $\gamma \in \Gamma_{\mathcal{T}}$, we define the corresponding 
\defn{partial configuration}
$\overline{\gamma} = \left\{\polymer{P} \in \gamma: |\polymer{P}| > 1\right\}$
that excludes polymers consisting of only a single monomer.
Note that in the context of $\mathcal{T}$, the mapping $\gamma \mapsto \overline{\gamma}$ is one-to-one.
We consider only partial configurations with finite-sized polymers. The notion of partial configuration will be useful in reasoning about TBNs with infinite monomer counts but finite size polymers, since all but finitely many monomers will be excluded from the partial configurations we consider.

Now we define the number of elementary merge operations required to reach a saturated configuration from the configuration of all singletons:
\begin{equation} \label{eq:count-merges-to-stable}
    m(\gamma) =
    \qty(\sum_{\polymer{P} \in \overline{\gamma}} |\polymer{P}|)
    -
    \abs{\overline{\gamma}}
\end{equation}
We can then define the stable configurations as those 
saturated configurations
that minimize the number of merges required to reach them from the all-singletons configuration.
\begin{equation*}
    \functionproblem{StableConfigs}(\mathcal{T}) = \left\{\gamma \in \Gamma_{\mathcal{T}}:  
    (\forall \gamma' \in \Gamma_{\mathcal{T}})\,\, m(\gamma) \leq m(\gamma')\right\}
\end{equation*}
Note that $m(\gamma) = m(\overline{\gamma})$.  
Thus
the \functionproblem{StableConfigs} problem may be equivalently posed as finding the set of partial configurations $\overline{\gamma}$ that minimize $m(\overline{\gamma})$.

We now describe how to handle infinite counts.
A configuration is saturated if and only if none of its starred sites (elements of $S_\mathcal{T}^*$) are exposed. 
Thus we focus on the subset of monomers that contain starred sites: 
the \emph{limiting monomers} 
$\mathcal{T}_L = \{\monomer{m} \in \mathcal{T}: \monomer{m} \cap S_\mathcal{T}^* \neq \emptyset\}$.
Limiting monomers are required to have finite count, whereas nonlimiting monomers (those with all unstarred sites) are allowed to be finite or infinite count.
Our IP representation of a configuration explicitly accounts for \emph{all} the limiting monomers, but only the nonlimiting monomers (in $\mathcal{T} \setminus \mathcal{T}_L$) in a polymer with some limiting monomer;
implicitly every other nonlimiting monomer is unbound (i.e., in its own singleton polymer).
This allows us to describe infinite configurations where all but finitely many of the infinite count monomers are unbound,
guaranteeing that the number of merges counted in \cref{eq:count-merges-to-stable} is finite.

\subsection{Casting StableConfigs as an IP}\label{subsec:formulation}

\subsubsection{Finding a single stable configuration}

We first describe how to find a single element from \functionproblem{StableConfigs}($\mathcal{T}$) by identifying its partial configuration in $\mathcal{T}$.  
We begin by fixing an upper bound $B$ on the number of non-singleton polymers in any partial configuration.  
If no \textit{a priori} bound for $B$ is available, conservatively take $B = |\mathcal{T}_L|$,
the total number of limiting monomers.

\paragraph{Nonnegative integer variables}
Assume an arbitrary ordering of the $m$ monomer types $\monomer{m}_1, \monomer{m}_2, \dots$.  
Our IP formulation uses $B \cdot m + B$  nonnegative integer variables describing the solution via its partial configuration:
\begin{itemize}
    \item $\Count(\monomer{m}, j)$: 
    count of monomer type $\monomer{m} \in \mathcal{T}$ in polymer $\polymer{P}_j$ where $j \in \{1, 2, \dots,B\}$
    \item $\Exists(j)$: false (0) if polymer $\polymer{P}_j$ is empty, possibly true (1) otherwise, $j \in \{1, 2, \dots,B\}$
\end{itemize}

\begin{example}
\label{ex:tbn-ip-formulation-1}
    Recall the TBN of~\cref{fig:tbn-intro-example}.
    Suppose the TBN has 1 each of 
    $\monomer{m}_1 = \{a^*, b^*\},
    \monomer{m}_2 = \{a, b\},
    \monomer{m}_3 = \{a\},
    \monomer{m}_4 = \{b\},$
    with upper bound $B=2$ on the number of non-singleton polymers.
    $\mathcal{T}_L = \{\monomer{m}_1\}$ since $\monomer{m}_1$ is the only monomer with starred sites.
    The linear constraints (see below for details) do not require all copies of $\monomer{m}_2,\monomer{m}_3,\monomer{m}_4 \in \mathcal{T} \setminus \mathcal{T}_L$ to be included in a polymer.
    The stable configuration on the right of~\cref{fig:tbn-intro-example} 
    (partition 
    $\{ \monomer{m}_1, \monomer{m}_2 \}, 
    \{\monomer{m}_3\},
    \{\monomer{m}_4\}$)
    is represented in the IP by setting
    $\Count(\monomer{m}_1, 1) = $ $\Count(\monomer{m}_2, 1) = 1,$
    $\Count(\monomer{m}_3, 1) = $ $\Count(\monomer{m}_4, 1) = 0$
    (monomers 1 and 2 are in polymer 1, but monomers 3 and 4 are not),
    and setting 
    $\Count(\monomer{m}_i, 2) = 0$ for $i= 1,2,3,4$
    (no monomers are in polymer 2),
    $\Exists(1) = 1$
    (polymer 1 is non-empty),
    and 
    $\Exists(2) = 0$
    (polymer 2 is empty).
\end{example}

\begin{example}
    Suppose a TBN with the same monomer types 
    as~\cref{ex:tbn-ip-formulation-1} 
    has 3 of 
    $\monomer{m}_1$
    and infinitely many of the remaining monomers (allowed since they are not limiting),
    with 
    $B=4$. 
    The partial configuration where two copies of $\monomer{m}_1$ are each bound to a single $\monomer{m}_2$ 
    (forming two polymers with two monomers each, as in the stable configuration of~\cref{fig:tbn-intro-example}),
    and the third $\monomer{m}_1$ is bound to an $\monomer{m}_3$ and an $\monomer{m}_4$
    (forming one polymer with three monomers,
    as in the rightmost non-stable saturated configuration of~\cref{fig:tbn-intro-example})
    is represented in the IP by setting
    $\Exists(1) = \Exists(2) = \Exists(3) = 1$ and
    $\Exists(4) = 0$,
    and 
    \[
    \begin{array}{l}
    \Count(\monomer{m}_1, 1) = 1,\quad \Count(\monomer{m}_2, 1) = 1,\quad
    \Count(\monomer{m}_3, 1) = 0,\quad \Count(\monomer{m}_4, 1) = 0,
    \\
    \Count(\monomer{m}_1, 2) = 1,\quad \Count(\monomer{m}_2, 2) = 1,\quad
    \Count(\monomer{m}_3, 2) = 0,\quad \Count(\monomer{m}_4, 2) = 0,
    \\
    \Count(\monomer{m}_1, 3) = 1,\quad \Count(\monomer{m}_2, 3) = 0,\quad
    \Count(\monomer{m}_3, 3) = 1,\quad \Count(\monomer{m}_4, 3) = 1,
    \\
    \Count(\monomer{m}_1, 4) = 0,\quad \Count(\monomer{m}_2, 4) = 0,\quad
    \Count(\monomer{m}_3, 4) = 0,\quad \Count(\monomer{m}_4, 4) = 0.
    \end{array}
    \]
\end{example}

\begin{remark}
The constraints described below allow
$\Exists(j)=0$ even if polymer $j$ is nonempty,
even though the variables ultimately aim to count exactly the number of nonempty polymers 
(as $\sum_{j=1}^m \Exists(j)$).
A false negative undercounts the number of polymers, overcounting the number of merges in~\cref{eq:count-merges-to-stable}. 
However, the number of merges is being \emph{minimized} by the IP.
For a given setting of $\Count$ variables,
the minimum is achieved (subject to the constraints) by setting each $\Exists(j)=1$ \emph{if and only} if polymer $j$ is nonempty.
\end{remark}

\paragraph{Linear constraints}
\todo{DD: I mentioned the constraint above that the variables are nonnegative integers. For completeness it seems we should say that $\Exists(j) \leq 1$ also.}
Let $\mathcal{T}(\monomer{m})$ denote the number of monomers of type $\monomer{m}$ in the TBN $\mathcal{T}$.  
Recall that $\monomer{m}(s)$ is the net count of site type $s \in S_\mathcal{T}$ in monomer type $\monomer{m}$ (negative if $\monomer{m}$ has more $s^*$ than $s$).

\begin{align}
    \Exists(j) 
    &\leq 1
    & \forall j \in \{1,2,\dots,B\}
    \label{exists-leq-1}
    \\
    \sum_{j=1}^{B}\Count(\monomer{m},j) &=    \mathcal{T}(\monomer{m})
     &\forall \monomer{m} \in \mathcal{T}_{L}
     \label{stabletbn-inf-strict-inclusion}
     \\
    \sum_{j=1}^{B}\Count(\monomer{m},j) &\leq \mathcal{T}(\monomer{m})
     &\forall \monomer{m} \in \mathcal{T}\setminus\mathcal{T}_{L}
     \label{stabletbn-inf-bounded-inclusion}
     \\
    \sum_{\monomer{m} \in \mathcal{T}}\Count(\monomer{m}, j)\cdot \monomer{m}(s) &\geq 0
     &\forall j \in \{1,2,\dots,B\}, \forall s \in S_\mathcal{T}
     \label{stabletbn-inf-limiting-sites}
     \\
    \sum_{\monomer{m} \in \mathcal{T}_{L}}\Count(\monomer{m}, j) &\geq \Exists(j)
     &\forall j \in \{1,2,\dots,B\}
     \label{stable-inf-tbn-nonempty}
\end{align}

Constraint~\eqref{exists-leq-1} enforces that $\Exists$ variables are Boolean.
Constraints~\eqref{stabletbn-inf-strict-inclusion} and~\eqref{stabletbn-inf-bounded-inclusion} intuitively establish ``monomer conservation'' in the partial configuration.  
Constraint
\eqref{stabletbn-inf-strict-inclusion} enforces that we account for every limiting monomer in $\mathcal{T}$.
Constraint
\eqref{stabletbn-inf-bounded-inclusion} establishes that for \emph{non}-limiting monomers, we cannot exceed their supply
(trivially satisfied for any infinite-count monomer);
any leftovers are assumed to be in singleton polymers in the full configuration,
but are not explicitly described by $\Count$ variables.
Constraint \eqref{stabletbn-inf-limiting-sites} 
ensures that all polymers are \selfsaturated.  
Specifically, the count of site $s \in S_\mathcal{T}$ within polymer $j$ must meet or exceed that of $s^*$.
Lastly, Constraint \eqref{stable-inf-tbn-nonempty} enforces that 
nonempty polymers contain at least one limiting monomer.
Ideally, this constraint should enforce that if a polymer contains no monomers at all, then it cannot be part of the nonempty polymer tally; however, if the constraint were modeled in this way, the formulation would admit invalid partial configurations that include explicit singleton polymers.

\paragraph{Linear objective function}
Subject to the above constraints, we minimize the number of merges needed to go from a configuration where all monomers are separate to a saturated configuration.
For finite count TBNs, this is the number of monomers minus the number of polymers in the partial configuration.
Equivalently (and applying to infinite TBNs), this is the sum over all nonempty polymers of its number of monomers minus 1.
Formally, the IP minimizes \eqref{eq:objective_value}:
\begin{equation}
\label{eq:objective_value}
    \sum_{j=1}^B
    \qty[ 
        \qty(
            \sum_{\monomer{m} \in \mathcal{T}} \Count(\monomer{m}, j)
        ) - \Exists(j)
    ]
\end{equation}
If 
polymer $j$ is empty
($\sum_{\monomer{m} \in \mathcal{T}} \Count(\monomer{m},j) = 0$),
then constraint~\eqref{stabletbn-inf-limiting-sites} forces $\Exists(j)=0$;
otherwise $\Exists(j)=1$ minimizes~\eqref{eq:objective_value}.
Thus the outer sum is over the nonempty polymers.

\subsubsection{Finding all stable configurations}
While an IP formulation for finding a single stable configuration is well-defined above, without modification it is ill-suited as a formulation to find \textit{all} stable configurations.  
In addition, tightening the available constraints (e.g., enforcing $\Exists(j) \iff$ polymer $j$ is nonempty, described below) provides a more robust framework to which to add custom constraints (e.g. specifying a fixed number of polymers).

\paragraph{IP to find optimal objective value, CP to enumerate optimal solutions}
\label{subsec:ip-find-optimal-then-cp-enumerate}

One straightforward improvement is to solve for the optimal value of the objective function using a dedicated IP solver such as SCIP, whose primal-dual methods exploit the underlying real-valued geometry of the search space to find an objective value more efficiently than Constraint Programming (CP) solvers such as CP-SAT.
Then, use this optimal value to bootstrap the CP formulation, which is better suited to enumerating all solutions with a given objective value.
This works particularly well in our experiments: use SCIP to solve the optimization problem (but SCIP has no built-in ability to enumerate all feasible solutions), then use CP-SAT (which takes longer than SCIP to find the objective value) to locate all feasible solutions to the IP obtaining the objective value found by SCIP.

\paragraph{Enforcing that $\Exists$ variables exactly describe nonempty polymers}
Constraint~\eqref{stabletbn-inf-limiting-sites} enforces that $\Exists(j)=0$ if polymer $\polymer{P}_j$ is empty, 
but it does not enforce the converse.  
However, when using CP-SAT with a \emph{fixed} objective value, we can no longer rely on the minimization of \cref{eq:objective_value} to enforce that $\Exists(j)=1 \iff \polymer{P}_j$ is nonempty.

We add a new constraint to handle this.
Let
\begin{equation}
\label{big_number_definition}
C = 1 + \sum_{s \in S_\mathcal{T}}\sum_{\monomer{m} \in \mathcal{T}} \mathcal{T}(\monomer{m}) \cdot \monomer{m}(s^*).
\end{equation}
$C$ is an upper bound on the largest number of monomers in a polymer in any valid partial configuration of $\mathcal{T}$ minimizing~\cref{eq:objective_value}.  
This corresponds to the worst case in which a single polymer contains \emph{every} limiting monomer, and each starred site is bound to its own unique monomer.  
The following constraint enforces that if $\Exists(j)=0$, then polymer $\polymer{P}_j$ contains no monomers:
\begin{align}
\label{stable-inf-tbn-nonempty-inverse}
    \sum_{\monomer{m} \in \mathcal{T}_{L}}\Count(\monomer{m}, j) &\leq C \cdot \Exists(j)
     &\forall j \in \{1,2,\dots,B\}
\end{align}

\paragraph{Eliminating symmetries due to polymer ordering}
In the formulation of~\cref{subsec:formulation}, many isomorphic solutions exist in the feasible region.
For instance, one could obtain a ``new'' solution by swapping the compositions of polymers $\polymer{P}_1$ and $\polymer{P}_2$.  
The number of isomorphic partial configurations grows factorially with the number of polymers.  
Before asking the solver to enumerate all solutions,
we must add constraints that eliminate isomorphic solutions.  
We achieve this by using the (arbitrary) ordering of the monomer types to induce a lexicographical ordering on the polymers,  
then add constraints ensuring that any valid solution contains the polymers in sorted order.

Sorting non-binary vectors in an IP is generally a difficult task (for instance, see \cite{ip-sorting}).  The primary reason for this difficulty is that encoding the sorting constraints involves logical implications ($p \implies q$), which, being a type of disjunction ($\neg p$ OR $q$), are difficult to encode into a convex formulation described as a conjunction (AND) of several constraints.
However, we do have an upper bound $C$ on the values that the $\Count$ variables can take, making certain ``large-number'' techniques possible.

Intuitively, when comparing two lists of scalars (i.e., vectors) to verify that they are correctly sorted, one must proceed down the list of entries until one of the entries is larger than its corresponding entry in the other list.  For as long as the numbers are the same, they are considered ``tied''.  When one entry exceeds the corresponding other, the tie is considered ``broken'', after which no further comparisons need be conducted between the two vectors.  

We $B \cdot m$ new Boolean (0/1-valued) variables ($\Tied(\monomer{m}_i, j)$ for each $i=1,\dots,m$ and $j=1,\dots,B$),
that reason about consecutive pairs of polymers $\polymer{P}_{j-1}$, $\polymer{P}_{j}$.
We describe constraints enforcing that for each $h \leq i$,
    $\Tied(\monomer{m}_i, j) = 1 \iff \Count(\monomer{m}_h, j-1) 
     = \Count(\monomer{m}_h, j).$

Let $C$ be defined as in \eqref{big_number_definition}. 
For simplicity of notation below,
define the constants $\Tied(\monomer{m}_0, j) = 1$ for all $j=1,\dots,B$.
The meaning of the sorting variables is then enforced by the following constraints, which we define for $i \in \{1, 2,\dots,m\}$ and $j \in \{2,3,\dots,B\}$:
\begin{align}
    \Tied(\monomer{m}_i, j) &\leq 
     \Tied(\monomer{m}_{i-1}, j)\label{stabletbn-tiebreak-grandfather}\\
    \Count(\monomer{m}_i, j-1) -
     \Count(\monomer{m}_i, j) &\leq C \cdot (1-
     \Tied(\monomer{m}_i, j))\label{stabletbn-tiebreak-equal-part1}\\
    \Count(\monomer{m}_i, j-1) -
     \Count(\monomer{m}_i, j) &\geq -C \cdot (1-
     \Tied(\monomer{m}_i, j))\label{stabletbn-tiebreak-equal-part2}\\
    \Count(\monomer{m}_i, j-1) -
     \Count(\monomer{m}_i, j)
     &\geq 1 - C \cdot \qty\big(
     1+ \Tied(\monomer{m}_i, j)-\Tied(\monomer{m}_{i-1}, j)
     )\label{stabletbn-tiebreak-broken}
\end{align}

Intuitively, \eqref{stabletbn-tiebreak-grandfather} enforces
$\Tied(\monomer{m}_i, j) \implies \Tied(\monomer{m}_{i-1}, j)$:
a tie in the current entry is only relevant if the tie was not resolved before. \eqref{stabletbn-tiebreak-equal-part1} and \eqref{stabletbn-tiebreak-equal-part2} together enforce
$\Tied(\monomer{m}_i, j) \implies \qty\big(\Count(\monomer{m}_i, j-1) =
\Count(\monomer{m}_i, j))$:
ties can only continue for as long as the corresponding entries are equal.
\eqref{stabletbn-tiebreak-broken} enforces
$\Tied(\monomer{m}_{i-1}, j)
\land
\lnot\Tied(\monomer{m}_i, j)
\implies 
\qty\big(\Count(\monomer{m}_i, j-1) >
\Count(\monomer{m}_i, j))$:
ties can only be broken if the tie was not broken previously and the current entries are ordered correctly.
Thus any solution satisfying these constraints must sort the polymers.

\subsection{Empirical running time measurements}\label{subsec:benchmarks}

\begin{figure}
    \centering
    \vspace{-0.5cm}
    \begin{subfigure}{.56\textwidth}
        \centering
        \includegraphics[width=\linewidth,
        trim=1.8cm 0.2cm 0cm 0cm,clip
        ]{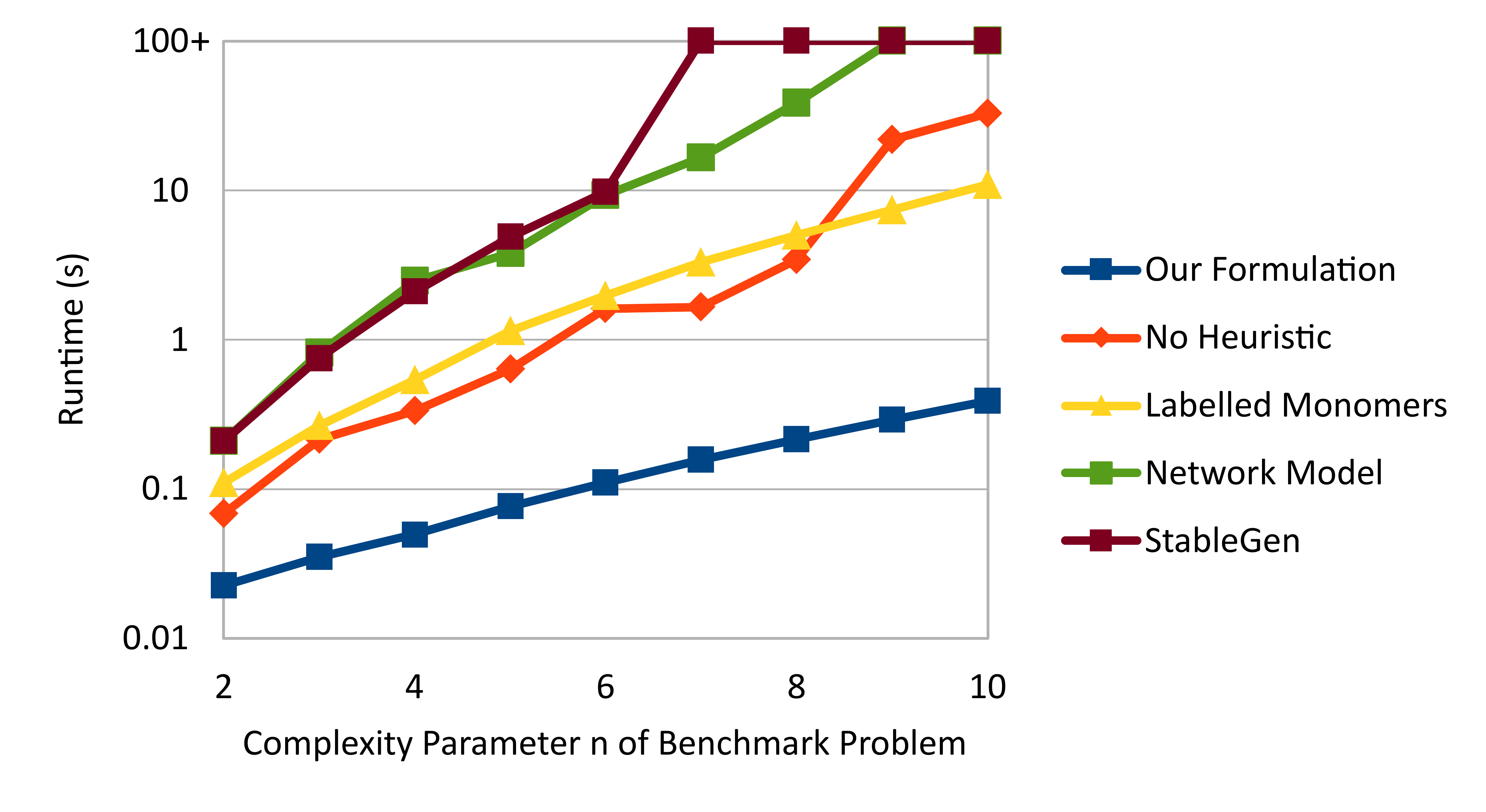}
        \label{fig:benchmark_by_grid_size}
    \end{subfigure}%
    \begin{subfigure}{.46\textwidth}
        \centering
        \includegraphics[width=\linewidth,
        trim=1.0cm 0.2cm 0cm 0cm,clip
        ]{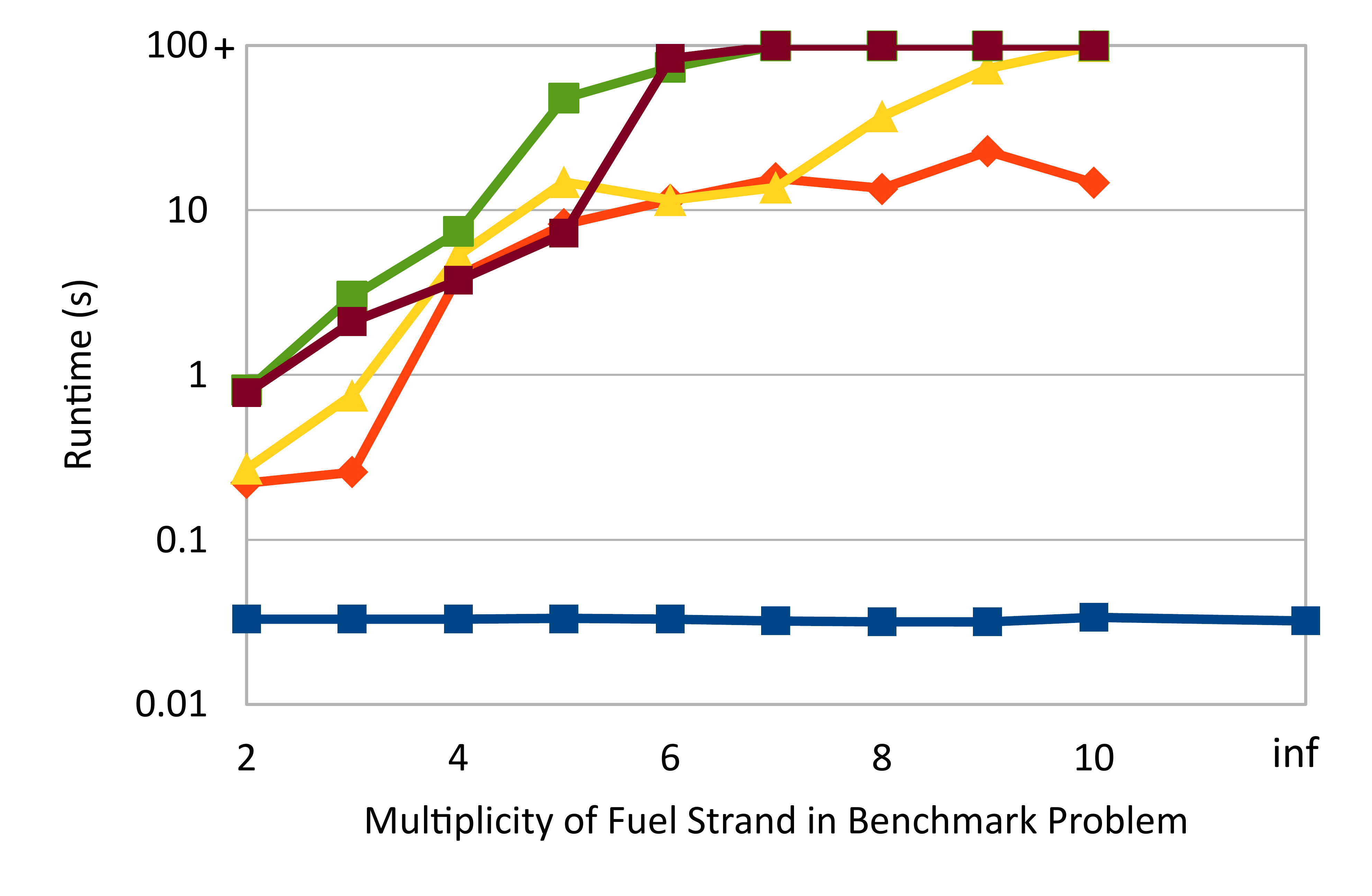}
        \label{fig:benchmark_by_fuel_quantity}
    \end{subfigure}
    \vspace{-0.6cm}
    \caption{\footnotesize 
    Empirical tests solving \functionproblem{StableConfigs} for our benchmark problem based upon its complexity parameter $n$ (left), and the multiplicity of the unstarred ``fuel'' strands (right).  Our formulation is tested against several variations on the approach (which are described in the text) and the StableGen algorithm from \cite{tbn-sat}.
    The TBN is parameterized by $n$ and contains the monomers
    $G_n = \{ x_{ij}^* : 1 \leq i,j \leq n \}$,
    $H_i = \{ x_{ij} : 1 \leq j \leq n \}$ for all $1 \leq i \leq n$, and
    $V_j = \{ x_{ij} : 1 \leq i \leq n \} \cup \{x_{ij} : j \leq i \leq n\}$ for all $1 \leq j \leq n$. 
    See Fig. 6 from~\cite{ktbn} for a detailed explanation of this TBN and its operation.
    The vertical axis is log scale.  Points at the top of the scale timed out after 100 seconds.  The alternate formulations cannot solve the instance in the case of infinite fuel strands.}
    \label{fig:benchmarks}
    \vspace{-0.5cm}
\end{figure}



For our empirical tests we use as a benchmark the autocatalytic TBN described in \cite[Section 4.2.2 and Fig. 6]{ktbn}.  This TBN features two large monomers of size $n^2$ in which $n$ is a parameter in the design, as well as a variable number of additional monomers (``fuels'') intended to be present in large excess quantities.

In addition to the formulation we give in this paper, we also tested a number of formulation variants, including the StableGen algorithm originally posed in \cite{tbn-sat} for solving the \functionproblem{StableConfigs} problem, justifying some of our design choices.   ``No Heuristic'' performs a thorough accounting of all monomers (not just those needed to achieve saturation against the limiting monomers).  ``Labelled Monomers'' assumes that the monomers are provided as a set, rather than a multiset.  ``Network Model'' is a modification of StableGen with an alternate saturation constraint which does not require the explicit invocation of site-level bonds.

Each data point represents the average of three runs, and the solver was allowed to run for up to 100 seconds before a timeout was forced.
\cref{fig:benchmarks} (left) shows the runtimes as they increase with the parameter $n$, holding the count of each fuel at 2.
\cref{fig:benchmarks} (right) fixes $n=3$ and shows the runtimes as they increase with the multiplicity of the fuel monomers.  Note that our formulation can solve the case when fuels are in unbounded excess, while the variant formulations require bounded counts of all monomers.

Our formulation solves all of the benchmark problems in under one second, suggesting that it is suitable for much larger/more complex problems than were approachable previously.


\section{Computing bases of locally stable configurations of TBNs}
\label{sec:hilbert}




We now shift attention to \emph{locally} stable configurations:
those in which no polymer can be split without breaking a bond.
Such a configuration may not be stable, but the only paths to create more polymers,
without breaking any bonds,
require first merging existing polymers (i.e., going uphill in energy before going down).
The saturated configurations are precisely those obtained by merging polymers starting from some locally stable configuration.
In this section we describe a technique for computing what we call the \emph{polymer basis}:
the (finite) set of polymers that can exist in locally stable configurations.
In~\cref{subsec:polymer-basis-hilbert-basis-equiv},
we show that an algebraic concept called the \emph{Hilbert basis}~\cite{de2012algebraic} characterizes the polymer basis.
In~\cref{subsec:polymer-basis-reason-behavior,subsec:translator-cycle} we show how the polymer basis can be used to reason about TBN behavior.


\subsection{Equivalence of polymer bases and Hilbert bases}
\label{subsec:polymer-basis-hilbert-basis-equiv}

We note that the connection between Hilbert bases and polymer bases is not particularly deep and does not require clever techniques to prove.
Once the definitions are appropriately set up, 
the equivalence follows almost immediately.
(Though we provide a self-contained proof.)
The primary insight of this section is that
casting TBNs in our IP formulation sets up the connection with Hilbert bases.
Since highly optimized software exists for computing Hilbert bases~\cite{4ti2},
this software can be deployed to automate reasoning about TBNs.

Let $M$ be a set of monomer types with $m = |M|$.
Let $\mathcal{S}_M$ denote the \emph{TBN schema of $M$},
the set of all TBNs containing only monomers from $M$,
such that starred sites are limiting
(i.e., such that saturated configurations have all starred sites bound).
Let $A_M$ be the \emph{matrix representation} of the monomer types in $\mathcal{S}_M$, 
describing the contents of each monomer type: formally, the row-$i$, column-$j$ entry of $A_M$ is $\monomer{m}_j(s_i)$, the net count of site type $s_i$ in monomer type $\monomer{m}_j$ (as an example, 
$\{a^*, b, a, a, a, c, c^*, c^*\}$ has net count $2$ of $a$, $1$ of $b$, and $-1$ of $c$).
Formally, a TBN $\mathcal{T} \in \mathcal{S}_M$ if and only if 
$A_M \mathcal{T} \geq \vec{0}$.

Recall that $\saturatedconfigs{\mathcal{T}}$ is the set of saturated configurations of the TBN $\mathcal{T}$,
and that a polymer $\polymer{p}$ is \emph{\selfsaturated}\ if it has no exposed starred sites,
i.e., $A_M \polymer{p} \geq \vec{0}$.
Define the \emph{polymer basis} $\mathcal{B}_{\mathcal{S}_M}$ to be the set of all polymers $\polymer{P}$ with the following properties:
\begin{itemize}
    \item
    $(\exists \mathcal{T} \in \mathcal{S}_M)
    (\exists \alpha \in \saturatedconfigs{\mathcal{T}})\ 
    \polymer{P} \in \alpha$
    (i.e., $\polymer{P}$ appears in some saturated configuration of a TBN using only the monomer types from $M$.)
    
    \item 
    There is no partition of $\polymer{P}$ into two (or more) \selfsaturated\ polymers.
\end{itemize}
For example, consider the monomers 
$G = \{a^*,b^*,c^*,d^*\},$
$H_1 = \{a,b\},$
$H_2 = \{c,d\},$
$V_1 = \{a,c\},$
$V_2 = \{b,d\}$
and let $M = \{G,H_1,H_2,V_1,V_2\}$.
The polymer basis 
$\mathcal{B}_{\mathcal{S}_M}$ is
$\{$
$\{G, H_1, H_2\},$
$\{G, V_1, V_2\},$
$\{H_1\},$
$\{H_2\},$
$\{V_1\},$
$\{V_2\}$
$\}$.
All other \selfsaturated\ polymers are unions of these.


To show that polymer bases can be characterized by Hilbert bases, we must first define some additional terms.
A \emph{conical combination} of a set of vectors is a linear combination of the vectors using only nonnegative coefficients.
An \emph{integer conical combination} of a set of vectors is a conical combination of the vectors using only integer coefficients.
A \emph{(polyhedral) convex cone} $C = \{\lambda_1 \vec{a}_1 + \dots + \lambda_n \vec{a}_n : \lambda_1, \dots, \lambda_n \geq 0\}$ is the space of all conical combinations of a finite set of vectors $\{\vec{a}_1, \dots, \vec{a}_n\}$ (and is said to be \emph{generated} by $\{\vec{a}_1, \dots, \vec{a}_n\}$).
$C$ is \emph{pointed} if $C \cap (-C) = \{\vec{0}\}$.
A set of the form $\{\vec{x} \in \mathbb{R}^m: A\vec{x} \geq \vec{0} \text{ and } \vec{x} \geq \vec{0}\}$ is always a pointed convex cone~\cite{de2012algebraic}.

A set is \emph{inclusion-minimal} with respect to a property if it has no proper subset that satisfies the same property.
The \emph{Hilbert basis} of a pointed convex cone $C$ is the unique inclusion-minimal set of integer vectors such that every integer vector in $C$ is an integer conical combination of the vectors in the Hilbert basis.
For example, the Hilbert basis of the convex cone generated (with nonnegative real coefficients) by $(1,3)$ and $(2,1)$ is $\{(1,1), (1,2), (1,3), (2,1)\}$;
note that $\frac{2}{5} \cdot (1,3) + \frac{4}{5} \cdot (2,1) = (2,2)$, which is not an integer combination of $(1,3)$ and $(2,1)$,
but $2 \cdot (1,1) = (2,2)$.

Recall that the matrix-vector product $A_M\polymer{p}$ gives the number of exposed sites of each type in the polymer, so that $A_M\polymer{p} \geq \vec{0}$ iff the polymer is \selfsaturated\ (i.e. none of the starred sites are exposed).




We are then interested in vectors contained in 
$\{\polymer{p} \in \N^m: A_M\polymer{p} \geq \vec{0}\}$.
Noting that $\N^m = \{\polymer{p} \in \mathbb{R}^m: \polymer{p} \geq \vec{0}\}\cap \mathbb{Z}^m$,
we can equivalently state that we are interested in all integer vectors contained in the pointed convex cone 
$\{\polymer{p} \in \mathbb{R}^m: A_M\polymer{p} \geq \vec{0}
\text{ and }
\polymer{p} \geq \vec{0}\}$.

\begin{theorem} \label{thm:polymer_basis_is_hilbert_basis}
    Let $\mathcal{S}_M$ be a TBN schema and let $A_M$ be the matrix representation of its monomer types.
    Then the polymer basis $\mathcal{B}_{\mathcal{S}_M}$ of $\mathcal{S}_M$ is the Hilbert basis of $\{\polymer{p} \in \mathbb{R}^m: A_M\polymer{p} \geq \vec{0} \text{ and } \polymer{p} \geq \vec{0}\}$.
\end{theorem}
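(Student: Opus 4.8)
The plan is to show that both the polymer basis $\mathcal{B}_{\mathcal{S}_M}$ and the Hilbert basis of $C = \{\polymer{p} \in \mathbb{R}^m : A_M \polymer{p} \geq \vec{0},\ \polymer{p} \geq \vec{0}\}$ coincide with a single combinatorial object, namely the set of \emph{irreducible} \selfsaturated\ polymers, where a nonzero \selfsaturated\ polymer $\polymer{p}$ is called irreducible if it cannot be written as a sum $\polymer{p} = \polymer{q} + \polymer{r}$ of two nonzero \selfsaturated\ polymers. The first step is to unwind the two defining conditions of $\mathcal{B}_{\mathcal{S}_M}$. I would argue that Condition~1 (that $\polymer{p}$ appear in some saturated configuration of some $\mathcal{T} \in \mathcal{S}_M$) is equivalent to $\polymer{p}$ merely being \selfsaturated: one direction is immediate from the definition of a saturated configuration, and for the converse, given any \selfsaturated\ $\polymer{p}$ I take $\mathcal{T}$ to be exactly the multiset of monomers comprising $\polymer{p}$. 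Since $A_M \polymer{p} \geq \vec{0}$ certifies both that $\mathcal{T} \in \mathcal{S}_M$ (starred sites are limiting) and that the single-polymer configuration $\{\polymer{p}\}$ is saturated, Condition~1 holds automatically. Thus $\mathcal{B}_{\mathcal{S}_M}$ is exactly the set of \selfsaturated\ polymers satisfying Condition~2.

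Next I would translate Condition~2 into algebraic irreducibility. A partition of $\polymer{p}$ into polymers is precisely a decomposition $\polymer{p} = \polymer{p}_1 + \dots + \polymer{p}_k$ into nonzero vectors of $\N^m$, and each part is \selfsaturated\ iff $A_M \polymer{p}_i \geq \vec{0}$. The key observation is that a sum of \selfsaturated\ polymers is again \selfsaturated\ (by linearity of $A_M$, a sum of nonnegative vectors is nonnegative), so any partition into $k \geq 2$ \selfsaturated\ parts can be coarsened to a two-part partition, both parts \selfsaturated. Hence Condition~2 is equivalent to irreducibility in the sense above, i.e., $\polymer{p}$ is not a sum of two nonzero integer points of $C$. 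Here I would also record that the integer points of $C$ are exactly the \selfsaturated\ polymers (nonnegative integer vectors $\polymer{p}$ with $A_M \polymer{p} \geq \vec{0}$).

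Finally I would invoke the standard characterization of the Hilbert basis of a pointed rational cone as the set of irreducible elements of the monoid $C \cap \mathbb{Z}^m$; since the excerpt already records that $C$ is pointed, this applies directly. For a self-contained argument I would prove both inclusions. Every generating set must contain each irreducible element: if an irreducible $\polymer{p}$ were a nonnegative-integer combination of generators with total coefficient sum at least $2$, then isolating one generator term from the rest would exhibit $\polymer{p}$ as a sum of two nonzero elements of $C$, contradicting irreducibility, so $\polymer{p}$ must equal a single generator. Conversely the irreducibles generate the whole monoid, which I would prove by strong induction on the cardinality $|\polymer{p}|$ (the number of monomers in $\polymer{p}$): any reducible $\polymer{p}$ splits into two \selfsaturated\ parts, each of strictly smaller cardinality, to which the inductive hypothesis applies. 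Chaining the three equivalences yields $\mathcal{B}_{\mathcal{S}_M} = \{\text{irreducible \selfsaturated\ polymers}\} = \text{Hilbert basis of } C$.

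The main obstacle I anticipate is not conceptual but organizational: carefully verifying that Condition~1 is vacuous, and that coarsening a $k$-part partition into a two-part one preserves the \selfsaturated\ property, together with stating the ``Hilbert basis equals irreducibles'' fact at the right level of generality. Pointedness is precisely what guarantees the induction is well-founded (here realized concretely by the grading $\polymer{p} \mapsto |\polymer{p}|$) and what forbids a nontrivial reduction of an irreducible generator; once these pieces are assembled, the theorem follows by transitivity of the characterizations.
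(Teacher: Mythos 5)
Your proposal is correct and takes essentially the same route as the paper's proof: both identify the integer points of the cone with the \selfsaturated\ polymers (your construction $\mathcal{T} = \polymer{P}$ itself makes explicit the converse direction the paper leaves implicit), prove generation by iteratively splitting reducible polymers into unsplittable ones, and prove inclusion-minimality by the same contradiction argument that an expressible basis element would admit a partition into \selfsaturated\ parts. Your intermediate characterization via irreducible elements of the monoid $C \cap \mathbb{Z}^m$ is just a more explicitly organized packaging of the paper's two-property verification, not a different argument.
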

\begin{proof}
    Note that the integer vectors in $\{\polymer{p} \in \mathbb{R}^m: A_M\polymer{p} \geq \vec{0} \text{ and } \polymer{p} \geq \vec{0}\}$ are precisely the polymers that appear in saturated configurations of TBNs in $\mathcal{S}_M$,
    since $A_M\polymer{p} \geq \vec{0} \iff$ polymer $\polymer{p}$ is \selfsaturated,
    and $\mathcal{S}_M$ is defined to have starred sites limiting, so that a configuration is saturated if and only if each of its polymers is \selfsaturated.
    
    We must show two properties to establish that $\mathcal{B}_{\mathcal{S}_M}$ is the Hilbert basis.
    First we must show that every polymer in saturated configurations of $\mathcal{S}_M$ is a nonnegative integer combination of polymers in $\mathcal{B}_{\mathcal{S}_M}$.
    Next, to establish inclusion-minimality, we must show that no polymer can be removed from $\mathcal{B}_{\mathcal{S}_M}$ while satisfying the first property.
    
    To see the first property,
    consider a polymer $\polymer{P}$ in a saturated configuration of some TBN in $\mathcal{S}_M$.
    If it cannot be split into multiple self-saturated polymers, then we are done since it is in $\mathcal{B}_{\mathcal{S}_M}$
    (it is the integer combination consisting of one copy of itself).
    Otherwise, we can iteratively split $\polymer{P}$ into polymers $\polymer{P}_1,\dots,\polymer{P}_k$ that themselves cannot be split into self-saturated polymers.
    Then $\polymer{P} = \polymer{P}_1 + \dots + \polymer{P}_k$.
    
    To see the second property,
    consider a \selfsaturated\ polymer $\polymer{P} \in \mathcal{B}_{\mathcal{S}_M}$ that can be removed while maintaining the first property.
    Since $\polymer{P}$ is an integer vector in 
    $\{\polymer{p} \in \mathbb{R}^m: A_M\polymer{p} \geq \vec{0} \text{ and } \polymer{p} \geq \vec{0}\}$, 
    $\polymer{P}$ is the nonnegative integer sum
    of some polymers remaining in $\mathcal{B}_{\mathcal{S}_M} \setminus \{\polymer{P}\}$.
    However, all polymers in $\mathcal{B}_{\mathcal{S}_M}$ are self-saturated,
    so $\polymer{P}$ can be partitioned into multiple self-saturated polymers.
    Thus $\polymer{P}$ is not an element of $\mathcal{B}_{\mathcal{S}_M}$ to begin with.
\end{proof}

\subsection{Using the polymer basis to reason about TBN behavior}
\label{subsec:polymer-basis-reason-behavior}

The complexity of computing the polymer basis in general can be very large; however, once it is calculated, reasoning about the stable configurations becomes a simpler task.
For instance, in a previous example we had
$\mathcal{B}_{\mathcal{S}_M} = $
$\{$
$\{G, H_1, H_2\},$
$\{G, V_1, V_2\},$
$\{H_1\},$
$\{H_2\},$
$\{V_1\},$
$\{V_2\}$
$\}$.
We can see from the above basis that in saturated configurations, $G$ can only be present one of two unsplittable polymer types: $\{G, H_1, H_2\}$ or $\{G, V_1, V_2\}$, and we can optimize the number of polymers in a configuration by taking the other monomers as singletons (which is allowed, as these singletons are in the polymer basis).
More generally, reasoning about stable configurations amounts to determining the number of each polymer type to use from the polymer basis so that the union of all polymers is the TBN, while using the maximum number of polymers possible.  Our software can also solve for stable configurations in this way; specifically, for a TBN $\mathcal{T}$, it can calculate the polymer basis (abbreviated here as $\mathcal{B}$) and then solve for the stable configurations using the following IP:

$$
\max_{
    \vec{c} \in
    \mathbb{N}^{
        \abs{
            \mathcal{B}
        }
    }
}
\norm{\vec{c}}_1
\text{  subject to  }
\sum_{i=1}^{
    \abs{\mathcal{B}}
}
c_i \mathcal{B}_i = \mathcal{T}
$$
Alternately, one can solve for the stable systems via an augmentation approach (see~\cite{de2012algebraic}).


If the goal is simply to solve the \functionproblem{StableConfigs} problem, we do not expect that solving for the stable configurations in this way will be more efficient than the previous formulation, as the time spent computing the Hilbert basis alone can require a great deal longer than solving via the formulation of the previous section.  Instead,
the true value of the basis is in its ability to describe \textit{all} saturated configurations of a TBN.

For instance, in \cite{ktbn}, the authors define an augmented TBN model in which a system can move between saturated configurations by two atomic operations: polymers can be pairwise merged (with an energetic penalty, i.e., higher energy) or they can be split into two so long as no bonds are broken (with an energetic benefit, i.e., lower energy; for instance $\{a,b\}, \{a^*, b^*\},  \{a\}, \{a^*\}$ can be split into $\{a, b\}, \{a^*, b^*\}$ and $\{a\}, \{a^*\}$, whereas $\{a\}, \{a^*\}$ cannot be split).
Any saturated polymer not in the basis can split into its basis components without breaking any bonds. 
Thus the polymer basis contains all polymers that can form in a \emph{local} minimum energy configuration, i.e., one where no polymer can split.

When designing a TBN, the designer will typically have a sense for which polymers are to be ``allowed'' in local energy minima. 
Proving that the system observes this behavior was not previously straightforward, but we can now observe that the TBN will behave ideally when its expected behavior matches its polymer basis.


\subsection{A case example: Circular Translator Cascade}
\label{subsec:translator-cycle}

We now discuss an example of using the polymer basis to reason about a TBN's kinetic behavior, studying a TBN known as a \emph{circular translator cascade}, first defined in~\cite{ktbn}:
\begin{align*}
\{&   
    \{a,b,c\},
    \{b,c,d\},
    \{c,d,e\},
    \{d,e,f\},
    \{e,f,a\},
    \{f,a,b\},
    \\&
    \{a*,b*\},
    \{b*,c*\},
    \{c*,d*\},
    \{d*,e*\},
    \{e*,f*\},
    \{f*,a*\}
\}
\end{align*}
There are two stable configurations of this TBN, shown in~\cref{fig:translator_cascade}.
\begin{figure}[h]
    \vspace{-0.2cm}
    \centering
    \includegraphics[height=4cm]{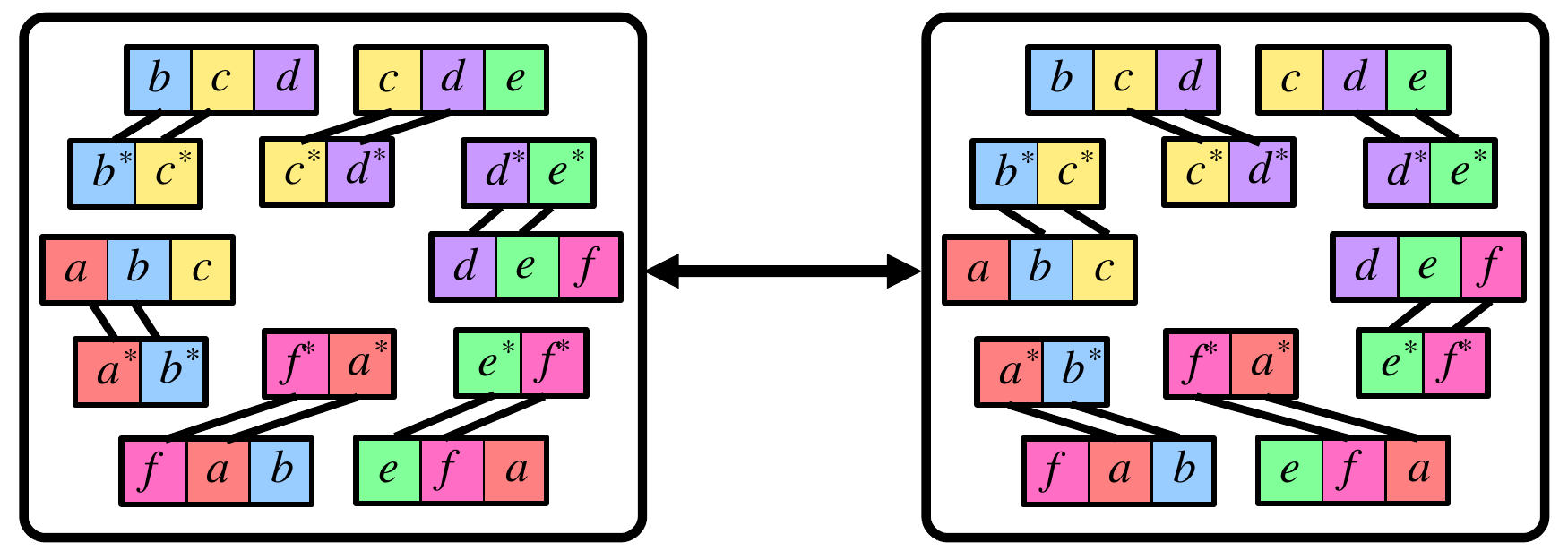}
    \vspace{-0.2cm}
    \caption{ \footnotesize
        The two stable configurations of a variant of the circular translator cascade described in \cite{ktbn}.  In the left configuration, the unstarred monomers are bound to their ``left-side'' companions (e.g. $\{a,b,c\}$ is bound to $\{a^*, b^*\}$), and in the right configuration, the unstarred monomers are bound to their ``right-side'' companions (e.g. $\{a,b,c\}$ is bound to $\{b^*, c^*\}$).
    }
    \label{fig:translator_cascade}
    \vspace{-0.3cm}
\end{figure}

We consider now the ``pathways'' by which one of the stable configurations can ``transition'' to another.  This process is described formally in \cite{ktbn}; here we give an intuitive description.  Informally, we admit as atomic operations the ability for two polymers to merge or for one polymer to split into two polymers, so long as the resulting configuration remains saturated. In essence, these operations are modelling the physical phenomenon of solutes colocalizing in solution before reactions occur, specifically in dilute solutions in which enthalpic bond rearrangements occur on a timescale much faster than the timescale for entropic colocalization.  
If many polymers must be merged in some intermediate configuration to transition between stable configurations, then since each merge is individually unlikely, the successive merges required are exponentially unlikely
i.e., a large \emph{energy barrier} exists to transition between the configurations.

The design intention of this TBN is to have two stable configurations with a large energy barrier to transition between them.
For the largest possible energy barrier, the transition should require the simultaneous merging of \emph{all} of the polymers into a single polymer as an intermediate step.  
However, this is not the case for the TBN of~\cref{fig:translator_cascade}; the polymer basis gives insight into why.
See~\cite[Section A.2]{ktbn} for an argument why more domain types and monomer types are required.
We interpret this as a design error (in fact it actually \emph{was} a design error in an early draft of~\cite{ktbn}).
We now explain how the error can be detected by reasoning about the polymer basis of the system,
justifying that the automated computation of the polymer basis by our software enables one to automate some reasoning about the correct behavior of TBNs.



If it were true that the polymer basis contained only the 12 polymer types that are present in the two stable configurations of~\cref{fig:translator_cascade}, 
then that would be sufficient to prove the high energy barrier.
To see why this is true, suppose there were a locally stable intermediate configuration that is part of a lower barrier transition.  Since the configuration is locally stable, it is saturated, and none of its polymers can be partitioned into self-saturated polymers.  By definition, the polymer basis should then contain all of the polymers present in this intermediate configuration.  However, all of the polymers in the basis have exactly two monomers, and so there must be 6 polymers in the intermediate configuration.  The stable configurations also have 6 polymers, and so the intermediate configuration is also stable, but this contradicts that there are only two stable configurations.

\begin{figure}[ht]
    \vspace{-0.25cm}
    \centering
    \includegraphics[width=5.6in, 
    trim=0.6cm 0.6cm 0.6cm 0.6cm, 
    clip]{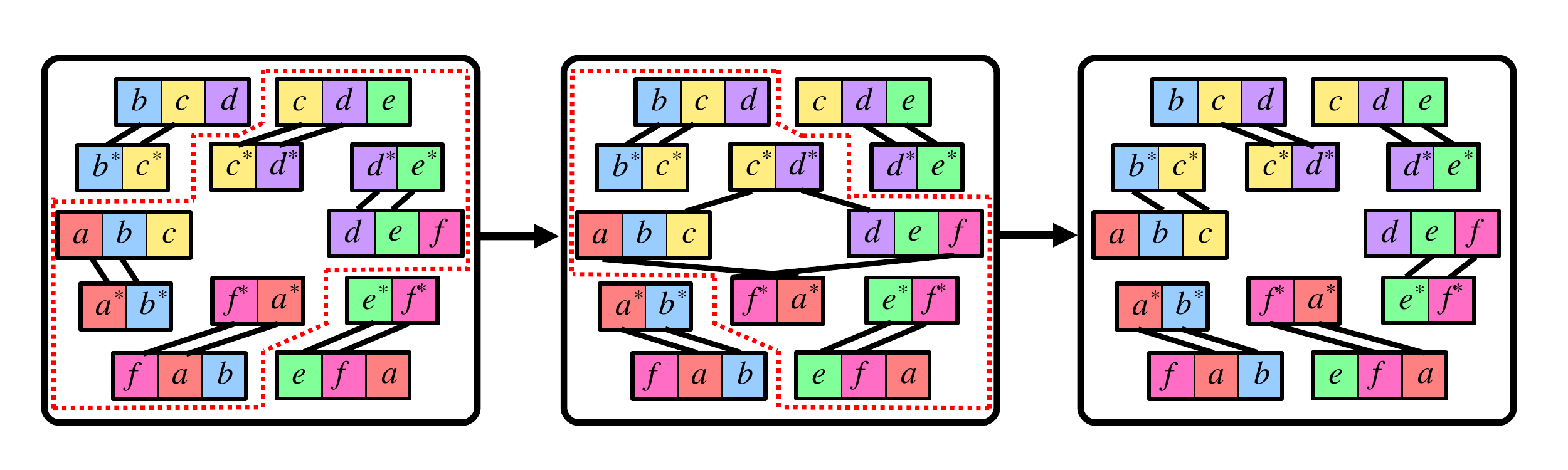}
    \vspace{-0.6cm}
    \caption{ \footnotesize
        A counterexample to the claim that transitioning between stable configurations of this TBN requires the simultaneous merger of all monomers into a single polymer.  Starting from the stable configuration on the left, by merging the four polymers within the red dotted outline, it is possible to re-arrange bonds and then split to the middle configuration.  Then from the middle configuration, by merging the three polymers in the red dotted online, it is possible to re-arrange bonds and then split to the stable configuration on the right. Such intermediate configurations are evident by examining the elements of the polymer basis.
    }
    \label{fig:translator_cascade_defeat}
    \vspace{-0.2cm}
\end{figure}

In fact, the polymer basis for this TBN has 57 entries (determined via our software),
not 12, and we can use this basis to disprove the high energy barrier,
i.e, to show that there is a sequence of merges and splits that transitions between the two stable configurations,
without all monomers ever being merged into a single polymer.
To discover a pathway that demonstrates the lower energy barrier, consider one unexpected entry in the polymer basis: $\polymer{P} = \{\{a,b,c\}, \{d,e,f\},\{c^*, d^*\},\{f^*,a^*\}\}$.  Its existence in the polymer basis tells us that there must be some saturated configuration that contains it.  If we examine where these monomers were in one of the original stable configurations (\cref{fig:translator_cascade}, left), we see that these were originally in polymers
\[
\{\{a,b,c\},\{a^*, b^*\}\},\quad
\{\{c,d,e\},\{c^*, d^*\}\},\quad
\{\{d,e,f\},\{d^*, e^*\}\}, \quad
\{\{f,a,b\},\{f^*, a^*\}\}.
\]
From the starting configuration, if only these four polymers were merged, then they could then iteratively split into $\polymer{P}$, $\{\{c,d,e\},\{d^*,e^*\}\}$, and $\{\{f,a,b\},\{a^*,b^*\}\}$.  Since the latter two polymers are part of the target configuration,
one could now greedily merge all polymers except for these latter two and then split into the target configuration.  At no point in the interim were all polymers merged together into a single polymer.  The resulting pathway is illustrated in~\cref{fig:translator_cascade_defeat}.

The difference between intended and actual barrier in this design becomes more pronounced if it is scaled up to include more site types and monomers.  
In~\cite{ktbn} it is shown that by modifying the design, it is possible to achieve a linear energy barrier by using a quadratic number of site types.  

\section{Conclusion}


In our investigation we observed that it was generally more efficient to solve \functionproblem{SaturatedConfigs} by finding the optimal objective value using an IP solver as a first step, followed by using a CP solver on the same formulation with the objective value now constrained to the value found by the IP solver.
Are further computational speedups possible by using IP as a callback during the CP search, instead of only in the beginning?  How would one formulate the subproblems that would need to be solved in these callbacks?

In this paper we also note the value of polymer bases that are derived from the matrix containing the monomer descriptions.  Such polymer bases can be used to describe all saturated configurations of a TBN, and so provide a valuable tool for analyzing potential behavior of a TBN when the model is augmented with rules that allow for dynamics.  In practice, rather than discover unexpected behavior by calculating the polymer basis, a designer would instead like to begin with a set of behaviors and then create a TBN that respects them.  Can we begin from verifiable polymer/Hilbert bases, encoding desired behavior, and transform them into TBN/DNA designs?

The full TBN model \cite{ktbn} can also be used to describe the regime in which there is a more modest tradeoff between the two energetic penalties of breaking bonds and merging complexes (i.e., saturation is not guaranteed).
For example toehold-length binding sites in DNA strand displacement systems~\cite{qian2011scaling, srinivas2017enzyme, chen2013programmable, zhang2009control, zhang2011dynamic}
are intended to dissociate over timescales comparable to association.
Indeed, our software \cite{stable_tbn_software} includes an implementation
of the \functionproblem{StableConfigs} formulation in which this relative weighting factor is included in the objective function.  Under what conditions can a comparable polymer basis for such a system be found?  Within the context of integer programming, it is known that by adding constraints to the design, one can reduce the complexity of finding/verifying Hilbert bases (and related \emph{Graver bases}) \cite{best_algorithms_for_graver}, but it is not clear how to interpret these numeric constraints within the context of TBNs.

Satisfiability Modulo Theory (SMT) formulations
have the ability to express TBN concepts (such as reachability along kinetic paths) without converting to a linear algebra framework, and existing solvers can solve SMT instances with surprising efficiency (for example, Z3 \cite{z3}).  Can such solvers reason about TBNs directly within a reasonable time frame?  Can they efficiently extract information beyond what is contained in the polymer basis?


\bibliography{references}
\bibliographystyle{plainurl}

\end{document}